\documentclass[journal,twoside,web]{ieeecolor}
\usepackage{generic}
\usepackage{cite}
\usepackage{algorithm,algorithmic}
\usepackage{textcomp}

\usepackage{amsmath,amssymb,amsfonts}

\newtheorem{lemma}{Lemma}

\usepackage{graphicx}
\graphicspath{{Figures/}}
\usepackage{booktabs}
\usepackage{multirow}
\usepackage{tabularx}
\usepackage{makecell}
\usepackage{array} 
\newcolumntype{Y}{>{\raggedright\arraybackslash}X}

\usepackage{bbm}
\usepackage{url}
\usepackage[hidelinks]{hyperref} 

\usepackage{makecell}

\hyphenation{op-tical net-works semi-conduc-tor}

\def\BibTeX{{\rm B\kern-.05em{\sc i\kern-.025em b}\kern-.08em
    T\kern-.1667em\lower.7ex\hbox{E}\kern-.125emX}}
\begin{document}
\title{Continuous Energy Landscape Model for Analyzing Brain State Transitions}
\author{Triet~M.~Tran,
        Seyed~Majid~Razavi,
        Dee~H.~Wu,
        and~Sina~Khanmohammadi*
\thanks{Triet~M.~Tran is with the Data Science and Analytics Institute, University of Oklahoma, Norman, OK 73109, USA, and also with the Pat Summitt Clinic, University of Tennessee Medical Center, Brain and Spine Institute, Knoxville, TN 37920, USA. }
\thanks{Seyed~Majid~Razavi is with the Data Science and Analytics Institute, University of Oklahoma, Norman, OK 73109, USA.}
\thanks{Dee~H.~Wu is with the Department of Radiological Sciences, University of Oklahoma Health Sciences Center, Oklahoma City, OK 73104, USA,
and also with the School of Computer Science, University of Oklahoma, Norman, OK 73109, USA.}
\thanks{Sina~Khanmohammadi is with the Data Science and Analytics Institute, University of Oklahoma, Norman, OK 73109, USA, and also with the School of Computer Science, University of Oklahoma, Norman, OK 73109, USA.\\
*Corresponding author: Sina~Khanmohammadi (e-mail: sinakhan@ou.edu).}}

\maketitle

\begin{abstract}
Energy landscape models characterize neural dynamics by assigning energy values to each brain state that reflect their stability or probability of occurrence. The conventional energy landscape models rely on binary brain state representation, where each region is considered either active or inactive based on some signal threshold. However, this binarization leads to significant information loss and an exponential increase in the number of possible brain states, making the calculation of energy values infeasible for large numbers of brain regions. To overcome these limitations, we propose a novel continuous energy landscape framework that employs Graph Neural Networks (GNNs) to learn a continuous precision matrix directly from functional MRI (fMRI) signals, preserving the full range of signal values during energy landscape computation. We validated our approach using both synthetic data and real-world fMRI datasets from brain tumor patients. Our results on synthetic data generated from a switching linear dynamical system (SLDS) and a Kuramoto model show that the continuous energy model achieved higher likelihood and more accurate recovery of basin geometry, state occupancy, and transition dynamics than conventional binary energy landscape models. In addition, results from the fMRI dataset indicate a $0.27$ increase in AUC for predicting working memory and executive function, along with a $0.35$ improvement in explained variance ($R^2$) for predicting reaction time. These findings highlight the advantages of utilizing the full signal values in energy landscape models for capturing neuronal dynamics, with strong implications for diagnosing and monitoring neurological disorders.
\end{abstract}

\begin{IEEEkeywords}
Energy Landscapes, Maximum Entropy Models, State Space Models, Graph Neural Networks
\end{IEEEkeywords}

\section{Introduction}
\label{sec:introduction}

Understanding how the brain forms transient, task‑relevant circuits capable of coordinating information across distributed regions is central to explaining cognitive resilience and flexible control \cite{braver2003neural}. Biological systems must continually compress vast, high‑dimensional neural activity into lower‑dimensional latent representations, much like a conductor shaping the actions of diverse instruments, so that large‑scale networks can dynamically organize, stabilize, and reconfigure as needed \cite{dabagia2023aligning}. These latent structures are especially critical in networks such as the default mode (DMN), salience (SN), and limbic systems, where disruptions impair working memory, diminish executive control, and reduce cognitive resilience in patients with neurological disease \cite{kupis2021brain}. Capturing how these networks stabilize, interact, and transition among functional motifs is therefore of utmost importance and requires developing modeling frameworks that preserve the biological richness of neural signals while remaining computationally tractable.

In this regard, energy landscape models, inspired by statistical physics, have emerged as a fundamental framework for understanding brain dynamics \cite{watanabe2014energy,kang2019graph,olsen2024quality}. In these models, brain states are conceptualized as attractors within an energy landscape, with stable or metastable states represented as energy minima, and transitions between states reflecting the dynamics of neural activity \cite{ashourvan2017energy,ghaffari2025dynamic}. Despite their success, traditional Maximum Entropy Models (MEM) primarily rely on discretizing continuous neural signals, typically converting them into binary states ($\pm 1$) to simplify analysis and computations \cite{ezaki2017energy,kang2021bayesian}. However, binary discretization significantly reduces signal fidelity, discarding subtle yet biologically important variations in neural activity \cite{chen2021sources,taylor2023highlight}. In addition, such binary representation results in an exponential increase of the number of possible brain states as the number of regions of interest (ROIs) grows, making energy landscape calculations intractable even for moderately sized brain networks \cite{ezaki2017energy,kang2021bayesian}.

Hence, researchers have resorted to approximation methods to mitigate computational burdens in energy landscape calculation \cite{watanabe2013pairwise,ezaki2018ge,watanabe2014network}. For example, in \cite{ezaki2017energy}, The authors demonstrate that Maximum Pseudo-likelihood Estimation (PLE) can be used to approximate the likelihood function required for estimating the parameters of the maximum entropy model employed in energy landscape analysis. This method has become widely used due to its computational efficiency and asymptotic consistency. However, PLE often underestimates couplings in high-dimensional, noisy datasets due to its reliance on local approximations that ignore global dependencies. This limitation becomes more pronounced near critical points, where small changes in couplings lead to large variations in correlations \cite{kloucek2023biases}. 
Bayesian approaches, such as Bayesian maximum entropy model estimation (BMEM) \cite{kang2021bayesian}, have also been introduced that incorporate prior distributions to regularize parameter estimation and have shown to improve robustness, especially in scenarios with limited data \cite{kim2021variational}. Nonetheless, these methods still operate in binary states
potentially losing subtle but biologically meaningful signal fluctuations critical for capturing the inherent complexity of neural dynamics \cite{razavi2025brain,taylor2023highlight,chen2021sources}.

Here, we propose a novel continuous energy landscape
model using Graph Convolutional Networks (GCNs). Our
framework directly estimates the precision matrix from fMRI
signals without binarization, thereby preserving the full information inherent in neural signals. Additionally, we integrate bias terms analogous to external fields in traditional energy landscapes, capturing intrinsic, region-specific baseline activation differences that enhance model expressiveness and biological interpretability. We validated the proposed framework using a comprehensive set of experiments on both simulated data and real-world fMRI datasets from brain tumor patients. The results demonstrate significant improvements in parameter estimation and predictive modeling performance. These findings highlight the capability of our continuous energy landscape model to more effectively capture neural dynamics compared to traditional discretized approaches.

The rest of the paper is structured as follows: Section \ref{sec:preliminaries} provides the necessary background about conventional energy land scape models. Section \ref{sec:methods} introduces our proposed continuous GCN-based energy landscape framework including model formulation, GCN architecture, and optimization strategies. Section \ref{sec:experiments} describes experimental setup and datasets followed by Section \ref{sec:results} that presents the results and comparisons with traditional discrete energy landscape models. We conclude with a discussion of our proposed model including limitations and future directions in Section \ref{sec:discussion}. All the notations used in this paper are summarized in Table \ref{tab:notation} of Appendix \ref{sec:notations}.

\section{Preliminaries}
\label{sec:preliminaries}

\subsection{Discrete Energy Landscape Models}

The energy landscape framework conceptualizes whole-brain activity as a physical system in which each possible ``brain state" is assigned an energy value, reflecting its stability or likelihood of occurrence \cite{masuda2025energy}. The overall process of calculating an energy landscape involves three steps: defining the brain states, estimating the probability distribution of brain states, and quantifying the properties of the extracted distribution (i.e energy values). 

In the classical formulation of energy landscape analysis, we start with the continuous signal $\mathbf{y}$, thresholded to create the $N$ dimensional binary state vector $\mathbf{q}_{k} =(q_{k1},\dots,q_{kN})$, where $N$ is the total number of regions of interest (ROI) and $k$ is the index of $2^{N}$ distinct activation patterns. The indices $i$ and $j$ refer to individual ROIs, where $q_{ki}=+1$ if the signal is above the predefined activation threshold and $q_{ki}=-1$ otherwise. 

The pairwise maximum entropy model (MEM) then assigns a probability $p(\mathbf{q}_{k})$ to each state by maximizing the Shannon entropy subject to constraints that ensure the model reproduces the empirical first-order (mean activation) and second-order (pairwise correlations) moments observed in the data.

\begin{equation}
\begin{aligned}
\text{maximize:}&\quad
  -\sum_{k=1}^{2^{N}} p(\mathbf{q}_{k})\,
   \log p(\mathbf{q}_{k}) ,
\\
\text{subject to:}&\quad
  \sum_{k} p(\mathbf{q}_{k})\
       = 1,
\\ \quad 
  &\quad  \sum_{k} p(\mathbf{q}_{k})\,q_{ki}
       = \langle q_{i}\rangle_{\text{data}},
\\ \quad
  &\quad\sum_{k} p(\mathbf{q}_{k})\,q_{ki}q_{kj}
       = \langle q_{i}q_{j}\rangle_{\text{data}},
\end{aligned}
\end{equation}
where $\langle q_i \rangle_{\text{data}} = \frac{1}{T} \sum_{t=1}^T q_i^{(t)}$ and $\langle q_i q_j \rangle_{\text{data}} = \frac{1}{T} \sum_{t=1}^T q_i^{(t)} q_j^{(t)}$ denote the empirical average and pairwise correlation over the $T$ observed samples. The solution of this optimization problem has the Boltzmann (Gibbs) form:

\begin{equation}
p(\mathbf{q}_{k})
  \;=\;
  \frac{1}{\ell}\exp[-E(\mathbf{q}_{k})]
\label{eq:boltzmann_general}
\end{equation}

where $\ell={\displaystyle\sum_{k=1}^{2^{N}} \exp[-E(\mathbf{q}_{k})]}$ is the partition function that normalizes the probability distribution and $E$ is the energy function defined as: 
\begin{equation}
E(\mathbf{q}_{k})
  \;=\;
  -\sum_{i<j} W_{ij}\,q_{ki}q_{kj}
  \;-\;
  \sum_{i}   h_{i}\,q_{ki}.
\label{eq:bin_ener}
\end{equation}
Here, $W_{ij}$ captures the pairwise coupling between ROIs $i$ and $j$, while $h_{i}$ is a bias (external field) reflecting the intrinsic activity tendency of ROI~$i$. The pairwise sum $\sum_{i<j} W_{ij} q_i q_j$ is equivalent to the matrix form $-\tfrac12 \mathbf{q}^\top \mathbf{W} \mathbf{q}$, when $W_{ij}=W_{ji}$ and $W_{ii}=0$, with the factor $1/2$ avoiding double counting. Hence, the energy function in equation \ref{eq:bin_ener} could be rewritten in the compact form: 
\begin{equation}
E(\mathbf{q}) = -\frac{1}{2}\mathbf{q}^{\top}\mathbf{W}\mathbf{q} - \mathbf{h}^{\top}\mathbf{q}.
\label{eq:bin_ener_compact}
\end{equation}
\section{Methods}
\label{sec:methods}

\subsection{Continuous Energy Landscape Derivation}
\label{ssec:cont_energy}

In this study, we extend the conventional formulation of energy landscape analysis to directly utilize continuous-valued brain signals, denoted by $\mathbf{y} \in \mathbb{R}^{N}$, where $N$ represents the number of regions of interest (ROIs). This section outlines the key steps in deriving the continuous energy formulation, while the full mathematical details are provided in Appendix \ref{sec:appendix}.

To construct a continuous analog of the discrete energy landscape model, we replace the binary state vector $\mathbf{q} \in \{-1, +1\}^{N}$ with a continuous state vector $\mathbf{x} \in \mathbb{R}^{N}$, centered around a fixed baseline vector $\boldsymbol{\mu}$:

\begin{equation}
\mathbf{x} = \boldsymbol{\mu} + \tilde{\mathbf{y}},
\end{equation}

where $\tilde{\mathbf{y}}$ represents the signal fluctuations relative to the baseline $\boldsymbol{\mu}$. Using this continuous representation, we apply the maximum entropy principle to define a probability density function $p(\mathbf{x})$ by maximizing the differential Shannon entropy, subject to constraints on the empirical mean and covariance:

\begin{equation}
\begin{aligned}
\text{maximize:} &\quad
  -\int p(\mathbf{x}) \log p(\mathbf{x}) \, d\mathbf{x}, \\
\text{subject to:} &\quad
  \int p(\mathbf{x}) \, d\mathbf{x} = 1, \\
  &\quad \int p(\mathbf{x})\, x_i \, d\mathbf{x} = (\mu_{\Theta})_i, \\
  &\quad \int p(\mathbf{x})\, \big(x_i - (\mu_{\Theta})_i\big)\big(x_j - (\mu_{\Theta})_j\big) \, d\mathbf{x} = \Sigma_{ij},
\end{aligned}
\end{equation}

We parameterize via the precision $\mathbf{S} \succ 0$ and set $\boldsymbol{\Sigma}=\mathbf{S}^{-1}$.

Here, $(\mu_{\Theta})_i$ denotes the mean activity of region $i$, and $\Sigma_{ij}$ represents the empirical second-order moment (covariance) between regions $i$ and $j$. The solution to this optimization problem yields a multivariate Gaussian distribution with a Boltzmann-like form:

\begin{equation}
p(\mathbf{x}) = \frac{1}{\ell} \exp[-E(\mathbf{x})],
\label{eq:gaussian_boltzmann}
\end{equation}

where the partition function $\ell = (2\pi)^{N/2} |\boldsymbol{\Sigma}|^{1/2}\,\exp\!\big(-E_{\min}\big)$ normalizes the distribution, and $E_{\min}:=\min_{\mathbf{x}} E(\mathbf{x})$ denotes the global minimum of the energy, and the continuous energy function $E(\mathbf{x})$ is defined as:

\begin{equation}
E(\mathbf{x}) = \frac{1}{2} (\mathbf{x} - \boldsymbol{\mu})^\top \boldsymbol{\Sigma}^{-1} (\mathbf{x} - \boldsymbol{\mu}) - \mathbf{h}^\top \mathbf{x}.\
\label{eq:cel_energy}
\end{equation}

Here, $\boldsymbol{\Sigma}$ denotes the covariance matrix of the fluctuations $\tilde{\mathbf{y}}$, and $\mathbf{h}$ is a bias vector encoding the baseline activity levels of each ROI. Alternatively, we can express the distribution using the precision matrix $\mathbf{S} = \boldsymbol{\Sigma}^{-1}$:

\begin{equation}
p(\mathbf{x}) = \frac{\det(\mathbf{S})^{1/2}}{(2\pi)^{N/2}} \exp\left( -\tfrac{1}{2}\big(\mathbf{x}-\boldsymbol{\mu}_{\Theta}\big)^{\top}\mathbf{S}\big(\mathbf{x}-\boldsymbol{\mu}_{\Theta}\big) \right),
\label{eq:gaussian_boltzmann_S}
\end{equation}

where the continuous energy function is expressed as:  
\begin{equation}
E(\mathbf{x}) = \frac{1}{2} (\mathbf{x} - \boldsymbol{\mu})^\top \boldsymbol{S} (\mathbf{x} - \boldsymbol{\mu}) - \mathbf{h}^\top \mathbf{x}.
\label{eq:cel_energy_S}
\end{equation}


To ensure the probability density is properly normalized and integrable over $\mathbb{R}^N$, the quadratic form $(\mathbf{x} - \boldsymbol{\mu})^\top \mathbf{S} (\mathbf{x} - \boldsymbol{\mu})$ must be positive-definite. This guarantees that the energy function is bounded below and therefore the probability density in equation \ref{eq:gaussian_boltzmann} is well-defined and integrable. One way to enforce positive-definiteness is to approximate the precision matrix as $\mathbf{S} \equiv \alpha \mathbf{I} - \mathbf{W}$, where $\alpha > \lambda_{\max}(\mathbf{W})$ ensures that $\mathbf{S}$ is symmetric and positive-definite. However, this approach undermines the purpose of modeling a continuous energy function, since estimating the precision matrix via $\mathbf{S} = \alpha \mathbf{I} - \mathbf{W}$ relies on a coupling matrix $\mathbf{W}$ derived from discrete-state data. Therefore, in this study, we directly learn the precision matrix $\mathbf{S}$ from continuous data using Graph Convolutional Networks (see Sec.~\ref{ssec:gcn_param}).

\subsubsection{Unimodality of the quadratic CEL}
Because the precision matrix is positive definite ($\mathbf{S}\succ0$), the quadratic energy in Eq.~\eqref{eq:cel_energy_S} is strictly convex with a unique minimizer. Hence, this energy function defines a single basin, the set of states associated with that minimum, resulting in a unimodal distribution. However, many neurobiological systems exhibit multiple stable states corresponding to distinct functional configurations, which cannot be captured by a single basin. Therefore, here we use a mixture extension (CEL-Mix) in which $E(\mathbf{x})$ is the negative log of a Gaussian mixture (see section \ref{sec:GMM_Energy}).

\subsubsection{Boundedness of the energy landscape} Although the energy function includes a linear term $-\mathbf{h}^{\top}\mathbf{x}$, the quadratic term dominates for large $\|\mathbf{x}\|_2$, ensuring that $E(\mathbf{x})$ remains bounded from below.

By setting the gradient $\nabla E(\mathbf{x})=\mathbf{0}$, we find a unique minimizer at
\begin{equation}
\mathbf{x}_{\ast}=\boldsymbol{\mu}+\mathbf{S}^{-1}\mathbf{h},
\end{equation}
which provides the global energy minimum:
\begin{equation}
E_{\min}= -\mathbf{h}^{\top}\boldsymbol{\mu}-\frac{1}{2}\mathbf{h}^{\top}\mathbf{S}^{-1}\mathbf{h}.
\end{equation}
Consequently, the continuous energy landscape is strictly bounded from below:
\begin{equation}
E(\mathbf{x})\ge E_{\min}>-\infty,\quad \forall \mathbf{x}\in\mathbb{R}^{N},
\end{equation}
and the probability density \eqref{eq:gaussian_boltzmann} is well-defined and integrable.

\begin{lemma}[Boundedness and Gaussianity of the quadratic landscape]
Let $\mathbf{S}\in\mathbb R^{N\times N}$ be symmetric positive-definite matrix and let $\boldsymbol{\mu},\mathbf{h}$ be vectors in $\mathbb R^N$. Define $E(\mathbf{x})$ by \eqref{eq:cel_energy_S}.

Then $E$ is strictly convex with unique minimizer
$
\mathbf{x}_*=\boldsymbol{\mu}+\mathbf{S}^{-1}\mathbf{h},
$ and minimum value
$E_{\min}= -\mathbf{h}^\top \boldsymbol{\mu} - \tfrac12\,\mathbf{h}^\top \mathbf{S}^{-1}\mathbf{h}.$
Moreover,
\begin{equation}
E(\mathbf{x})=\tfrac12(\mathbf{x}-\mathbf{x}_*)^\top \mathbf{S} (\mathbf{x}-\mathbf{x}_*) + E_{\min},
\end{equation}
so $E$ is bounded below and coercive.  Thus, $p(\mathbf{x})\propto\exp[-E(\mathbf{x})]$ coincides precisely with the multivariate Gaussian \eqref{eq:gaussian_boltzmann_S}.
\end{lemma}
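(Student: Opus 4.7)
The plan is to derive everything by completing the square on the quadratic form in Eq.~\eqref{eq:cel_energy_S}, then read off convexity, the minimizer, coercivity, and the Gaussian identification as corollaries.

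First I would compute the gradient and Hessian directly: $\nabla E(\mathbf{x}) = \mathbf{S}(\mathbf{x}-\boldsymbol{\mu}) - \mathbf{h}$ and $\nabla^2 E(\mathbf{x}) = \mathbf{S}$. Since $\mathbf{S}\succ 0$ by hypothesis, the Hessian is uniformly positive definite, so $E$ is strictly convex on $\mathbb{R}^N$, guaranteeing at most one minimizer. Setting $\nabla E(\mathbf{x})=\mathbf{0}$ and using invertibility of $\mathbf{S}$ yields $\mathbf{x}_*=\boldsymbol{\mu}+\mathbf{S}^{-1}\mathbf{h}$.

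Next I would complete the square. Substituting $\mathbf{x}-\boldsymbol{\mu} = (\mathbf{x}-\mathbf{x}_*) + \mathbf{S}^{-1}\mathbf{h}$ into $\tfrac12(\mathbf{x}-\boldsymbol{\mu})^\top \mathbf{S}(\mathbf{x}-\boldsymbol{\mu}) - \mathbf{h}^\top\mathbf{x}$ and expanding, the cross terms linear in $(\mathbf{x}-\mathbf{x}_*)$ cancel against $-\mathbf{h}^\top \mathbf{x}$ up to the constant $-\mathbf{h}^\top\boldsymbol{\mu} - \tfrac12 \mathbf{h}^\top \mathbf{S}^{-1}\mathbf{h}$, which we identify as $E_{\min}$. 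This gives the stated identity $E(\mathbf{x}) = \tfrac12(\mathbf{x}-\mathbf{x}_*)^\top \mathbf{S}(\mathbf{x}-\mathbf{x}_*) + E_{\min}$. Evaluating at $\mathbf{x}=\mathbf{x}_*$ confirms the minimum value. Using $\lambda_{\min}(\mathbf{S})>0$ we get $E(\mathbf{x}) \ge \tfrac12 \lambda_{\min}(\mathbf{S})\,\|\mathbf{x}-\mathbf{x}_*\|_2^2 + E_{\min}$, which gives both boundedness below and coercivity $E(\mathbf{x})\to\infty$ as $\|\mathbf{x}\|\to\infty$.

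Finally, for the Gaussian identification I would exponentiate the completed-square form: $\exp[-E(\mathbf{x})] = \exp(-E_{\min})\,\exp\!\bigl[-\tfrac12(\mathbf{x}-\mathbf{x}_*)^\top \mathbf{S}(\mathbf{x}-\mathbf{x}_*)\bigr]$. The standard multivariate Gaussian integral gives $\int_{\mathbb{R}^N}\exp[-E(\mathbf{x})]\,d\mathbf{x} = \exp(-E_{\min})\,(2\pi)^{N/2}\det(\mathbf{S})^{-1/2}$, which matches the partition function $\ell$ stated after Eq.~\eqref{eq:gaussian_boltzmann} (with $\boldsymbol{\mu}_\Theta=\mathbf{x}_*$). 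Normalizing, $p(\mathbf{x})$ equals exactly Eq.~\eqref{eq:gaussian_boltzmann_S}. No step is genuinely hard; the only mild subtlety is bookkeeping during completion of the square to ensure that the linear and constant pieces assemble correctly into $E_{\min}$, and noting that the mean parameter of the Gaussian in Eq.~\eqref{eq:gaussian_boltzmann_S} must be identified with $\mathbf{x}_*$ rather than $\boldsymbol{\mu}$ once the bias $\mathbf{h}$ is present.
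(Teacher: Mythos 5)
Your proof is correct and follows essentially the same route as the paper: complete the square to obtain $E(\mathbf{x})=\tfrac12(\mathbf{x}-\mathbf{x}_*)^\top\mathbf{S}(\mathbf{x}-\mathbf{x}_*)+E_{\min}$, then read off convexity, boundedness, coercivity, and the Gaussian normalization (the paper's appendix does the same bookkeeping via an auxiliary function $f$ on the centered variable, while you substitute directly; your explicit Hessian computation and the $\lambda_{\min}(\mathbf{S})$ coercivity bound are minor additions to, not departures from, the paper's argument). Your closing remark that the Gaussian mean must be identified with $\mathbf{x}_*=\boldsymbol{\mu}+\mathbf{S}^{-1}\mathbf{h}$ rather than $\boldsymbol{\mu}$ is exactly the point the paper encodes via $\boldsymbol{\mu}_\Theta$.
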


\begin{proof}
Completing the square yields the stated decomposition and the expression for $E_{\min}$. Because $\mathbf{S}\succ0$, the quadratic form $(\mathbf{x}-\mathbf{x}_*)^\top \mathbf{S} (\mathbf{x}-\mathbf{x}_*)\ge 0$ for all $\mathbf{x}$, with equality iff $\mathbf{x}=\mathbf{x}_*$. This implies strict convexity, boundedness from below with minimum $E_{\min}$, and the Gaussian form for $p(\mathbf{x})$.
\end{proof}

\subsection{Graph-Convolutional Parametrization of the Precision Matrix}
\label{ssec:gcn_param}

To estimate the precision matrix $\mathbf{S}$, we use a data-driven parameterization based on Graph Convolutional Networks (GCNs). In this setting, the continuous energy landscape model defines a probability density:
\begin{equation}
p(\mathbf{x} \mid \boldsymbol{\Theta}) \propto \exp\big[-E(\mathbf{x}; \boldsymbol{\Theta})\big],
\end{equation}

where $\boldsymbol{\Theta}$ denotes the trainable parameters of the model, including the GCN weights, the linear projection ($\mathbf{W}_Z$), and the mapping that produces the external field ($\mathbf{h}(\boldsymbol{\Theta})$). These parameters are learned by maximizing the likelihood of the observed data, which is equivalent to minimizing a loss function based on the negative log-likelihood (NLL). For $T$ samples, the likelihood is:

\begin{equation}
\mathcal{L}(\boldsymbol{\Theta}) = \prod_{t=1}^{T} p(\mathbf{x}_t \mid \boldsymbol{\Theta}),
\end{equation}

where we take its log to avoid numerical underflow and simplify computation:

\begin{equation}
\log \mathcal{L}(\boldsymbol{\Theta}) = \sum_{t=1}^{T} \log p(\mathbf{x}_t \mid \boldsymbol{\Theta}).
\end{equation}

Since optimization frameworks mainly minimize a loss function, we define:

\begin{equation}
J(\boldsymbol{\Theta}) = -\log \mathcal{L}(\boldsymbol{\Theta}),
\end{equation}

which is the negative log-likelihood that maximizes the likelihood function. Given that $p(\mathbf{x}|\boldsymbol{\Theta})$ is a multivariate Gaussian, the loss function has a closed form:
\begin{equation}
\begin{aligned}
J(\boldsymbol{\Theta}) &=
\frac{1}{2}\sum_{t=1}^{T}
\big(\mathbf{x}_{t}-\boldsymbol{\mu}_{\boldsymbol{\Theta}}\big)^{\top}
\mathbf{S}(\boldsymbol{\Theta})
\big(\mathbf{x}_{t}-\boldsymbol{\mu}_{\boldsymbol{\Theta}}\big) \\
&\quad
-\frac{T}{2}\log\det(\mathbf{S}(\boldsymbol{\Theta}))
+\frac{TN}{2}\log 2\pi,
\end{aligned}
\end{equation}

where $\boldsymbol{\mu}_{\boldsymbol{\Theta}}=\boldsymbol{\mu}+\mathbf{S}(\boldsymbol{\Theta})^{-1}\mathbf{h}(\boldsymbol{\Theta})$ is the effective mean of the Gaussian distribution. This shift in the mean arises from the completed-square form of Eq.~\eqref{eq:cel_energy} (see Appendix \ref{sec:appendix}), where $\mathbf{h}(\boldsymbol{\Theta})$ represents the external fields. The term $\tfrac{TN}{2}\log 2\pi$ is constant with respect to $\boldsymbol{\Theta}$ and therefore could be dropped to simplify the loss function. 

To ensure numerical stability and avoid ill-conditioning, we introduce a Frobenius regularization term and optimize the updated loss function:
\begin{equation}
J'(\boldsymbol{\Theta})=J(\boldsymbol{\Theta})+\lambda\|\mathbf{S}(\boldsymbol{\Theta})\|_{F}^{2},\quad\lambda>0,
\label{eq:final_loss}
\end{equation}

where $F$ is the Frobenius norm.  Here, the positive-definite precision matrix $\mathbf{S}$ can be approximated as:
\begin{equation}
\mathbf{S} = \mathbf{Z}\mathbf{Z}^{\top} + \epsilon\mathbf{I},
\label{eq:precisionmatrix}
\end{equation}

where $\epsilon = 10^{-5}$ is a small positive constant added to ensure invertibility, and $\mathbf{Z}$ is a low-rank embedding obtained using a shared linear map:
\begin{equation}
\mathbf{Z} = \mathbf{H}\,\mathbf{W}_Z \in \mathbb{R}^{N\times r},
\label{eq:lowrankembedding}
\end{equation}

where $\mathbf{H}\in\mathbb{R}^{N\times d}$ is the per-node embedding produced by the final GCN layer, $\mathbf{W}_Z(\boldsymbol{\Theta}) \in \mathbb{R}^{d\times r}$ is a trainable projection matrix, and $r \ll N$ is the rank of the low-dimensional representation. Both $\mathbf{H}$ and $\mathbf{W}_Z(\boldsymbol{\Theta})$ are learned jointly with the remaining parameters in $\boldsymbol{\Theta}$ by minimizing $J'(\boldsymbol{\Theta})$ via gradient-based optimization. We selected $r$ from $\{8,12,16\}$ using time-blocked 3-fold cross-validation (CV), which is specifically designed for time series data \cite{liu2024using}.

The closed-form loss function in equation~\eqref{eq:final_loss} involves $\det(\mathbf{S})$, whose direct computation is expensive ($O(N^3)$ for an $N \times N$ matrix). To reduce complexity, we apply the matrix determinant lemma, which states:
\begin{equation}
\det(\mathbf{S}) = \det(\epsilon \mathbf{I} + \mathbf{Z}\mathbf{Z}^{\top}) = \epsilon^{N} \big|\mathbf{I}_r + \epsilon^{-1}\mathbf{Z}^{\top}\mathbf{Z}\big|,
\end{equation}
where $\mathbf{I}_r$ is the $r \times r$ identity matrix. This reduces the computational cost from $O(N^3)$ to $O(r^3)$, which is significantly more efficient given that $r \ll N$.

Equation~\eqref{eq:final_loss} also includes the inverse of the precision matrix $\mathbf{S}^{-1}$, which can be computed efficiently using the Woodbury identity:
\begin{equation}
\mathbf{S}^{-1} = \epsilon^{-1} \mathbf{I} - \epsilon^{-2}\, \mathbf{Z} \!\left(\mathbf{I}_r+\epsilon^{-1} \mathbf{Z}^\top \mathbf{Z}\right)^{-1}\! \mathbf{Z}^\top.
\end{equation}

\subsection{Training Graph Convolutional Networks (GCNs)} \label{ssec:pipeline}

Figure~\ref{fig:pipeline_flow} presents the training pipeline for graph convolutional networks (GCNs) used to estimate the model parameters $\boldsymbol{\Theta}$. The process begins by extracting functional brain networks from fMRI time series data. We construct an undirected graph $G = (V, E)$, where each node corresponds to a region of interest (ROI). Let $\mathbf{R} \in \mathbb{R}^{N \times N}$ denote the sample Pearson correlation matrix computed from standardized time series, where each ROI time series is transformed to zero mean and unit variance by subtracting its temporal mean and dividing by its temporal standard deviation. We retain a fixed edge density $\delta \in (0, 1)$ by thresholding $|R_{ij}|$ at the $(1-\delta)$-quantile of the off-diagonal absolute correlations to ensure scale invariance and comparable sparsity across subjects:
\begin{equation}
    \tau = \operatorname{quantile}\!\big(\{|R_{ij}| : i < j\},\, 1-\delta\big),
\end{equation}
where $\tau$ is the correlation threshold. The adjacency matrix $\mathbf{A}$ is then defined as:
\begin{equation}
    A_{ij} = \mathbb{1}\!\left\{|R_{ij}| \ge \tau\right\},
\end{equation}
resulting in an undirected graph with edge density $\delta$. Unless otherwise specified, we set $\delta = 0.10$ (top 10\% absolute correlations) to ensure consistent sparsity across subjects for between-subject comparisons. We use the weighted adjacency matrix $B_{ij} = |R_{ij}|\,A_{ij}$ and its symmetric normalization:
\begin{equation}
    \tilde{\mathbf{B}} = \mathbf{D}^{-1/2} (\mathbf{I} + \mathbf{B}) \mathbf{D}^{-1/2},
\end{equation}
where $\mathbf{D}$ is the diagonal degree matrix of $(\mathbf{I} + \mathbf{B})$, i.e., $D_{ii} = \sum_j (\mathbf{I} + \mathbf{B})_{ij}$. This normalization adds self-loops and ensures scale invariance, preventing high-degree nodes from dominating the aggregation process in GCNs.

In the training phase, gradients of \eqref{eq:final_loss} are computed via automatic differentiation and optimized using the Adam optimizer (learning rate $10^{-4}$) with gradient clipping (threshold $0.1$). Training proceeds until convergence, defined as minimal changes in both loss and gradient norms (below $10^{-4}$) for five consecutive epochs.

Finally, once the optimal model parameters are obtained, we compute the embeddings $\mathbf{Z}(\Theta)$ from the final GCN layer and obtain 
$\mathbf{Z}(\boldsymbol{\Theta}) = \mathbf{H}(\boldsymbol{\Theta})\,\mathbf{W}_Z(\boldsymbol{\Theta}),$ 
where $\mathbf{W}_Z$ is a trainable projection matrix that is part of $\boldsymbol{\Theta}$ and learned jointly with the GCN weights and the external
field mapping by minimizing Eq.~\eqref{eq:final_loss}. The resulting $\mathbf{Z}(\boldsymbol{\Theta})$ is then used to compute $\mathbf{S}(\boldsymbol{\Theta})$
according to Eq.~\eqref{eq:precisionmatrix}.

\begin{figure*}[!t]
	\centering
	\includegraphics[width=0.95\linewidth]{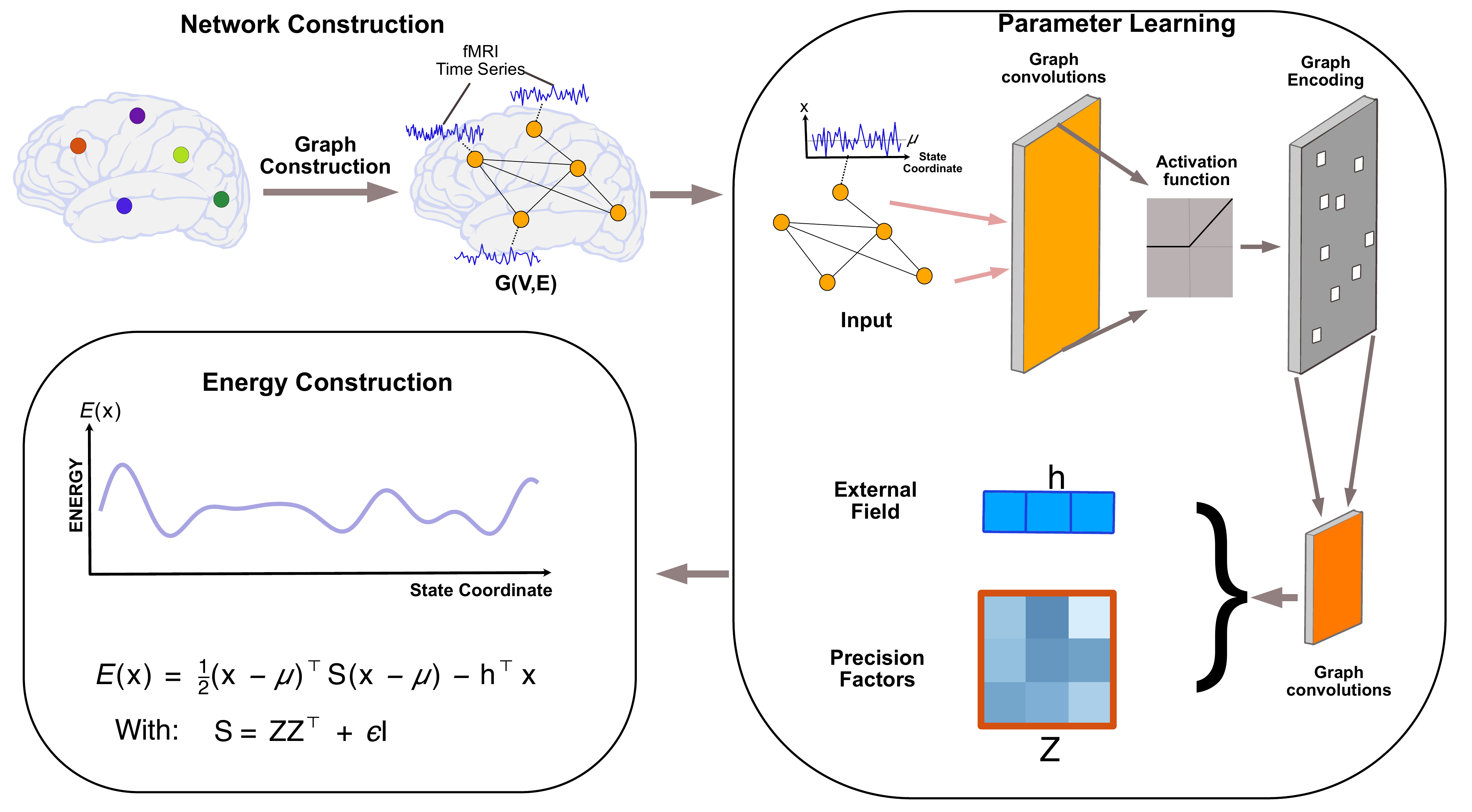}
	\caption{
        \textbf{Continuous energy landscape modeling via graph neural networks.} In Step one (Network Construction), we construct subject-specific functional networks $G=(V,E)$ by thresholding Pearson correlations from standardized fMRI time series and form the normalized weight matrix $\tilde{\mathbf{B}}$. In Step 2 (Parameter Learning), we estimate the precision matrix $\mathbf{S} = \mathbf{Z}\mathbf{Z}^\top + \epsilon \mathbf{I}$ together with the external field $\mathbf{h}(\boldsymbol{\Theta})$ by applying a graph convolutional network that produces node embeddings $\mathbf{H}(\boldsymbol{\Theta})$, which are projected via the trainable map $\mathbf{W}_Z(\boldsymbol{\Theta})$ to obtain embeddings $\mathbf{Z}$. Lastly, in the final step (Energy Construction), we utilize the learned $\mathbf{S}$, $\boldsymbol{\mu}_{\boldsymbol{\Theta}}$, and $\mathbf{h}(\boldsymbol{\Theta})$ to calculate the continuous energy values $E(\mathbf{x}_t;\boldsymbol{\Theta})$ and summarize the resulting energy landscape across time and subjects.
	}
	\label{fig:pipeline_flow}
\end{figure*}

\section{Experiments}
\label{sec:experiments}

We evaluated the performance of our proposed continuous energy landscape model in two settings. First, we utilized simulated data, which provided a known basin structure and transition dynamics, to validate the effectiveness of our proposed method. Next, we applied the model to real-world resting-state functional magnetic resonance imaging (rs-fMRI) data to investigate post-surgical cognitive outcomes in brain tumor patients. We also compared the performance of the continuous energy landscape model against the conventional discrete energy approach.

\subsection{Simulated Data}
\label{sec:simulated}

We simulated multivariate time-series data with a known multi-basin structure and a controlled transition matrix between basins. To ensure a balanced benchmark, we employed two distinct generators: (1) a Switching Linear Dynamical System (SLDS), which produces piecewise-linear Gaussian segments that align naturally with a continuous energy formulation, and (2) a Kuramoto oscillator network with template locking, which generates nonlinear, phase-coupled dynamics that yield clear attractor patterns, which aligns with a discrete, Ising-inspired formulation.

Using both ensures that neither modeling family is advantaged by a single data type and that each is tested in a typical operating regime. Throughout, $N$ denotes the number of variables (ROIs), $K$ the number of basins (states), $T$ the number of time points, and the signal-to-noise ratio (SNR) is defined as:
\begin{equation}
\mathrm{SNR} \;=\; \frac{\min_{k\neq k'}\|\boldsymbol{\mu}_k-\boldsymbol{\mu}_{k'}\|_2}{\sqrt{\mathrm{tr}(\boldsymbol{\Sigma})/N}},
\end{equation}
where $\boldsymbol{\mu}_k$ is the mean vector for state $k$ and $\boldsymbol{\Sigma}$ is the within-state covariance matrix used in SLDS. For the Kuramoto model, we first transform the phase variables to observables $\mathbf{y}_t = \sin(\boldsymbol{\varphi}_t)$, standardize these observables, and then compute $\boldsymbol{\mu}_k$ and $\boldsymbol{\Sigma}$ from $\mathbf{y}_t$. The SNR value controls the separation between basin centers relative to within-state variability, and we threshold it to define low, medium, and high separability regimes.

\subsubsection{Switching Linear Dynamical System (SLDS)}
To capture regime-switching behavior in time-series data, we combine a Markov chain for discrete state transitions with a Gaussian autoregressive model for continuous observations. Specifically, we define a first-order Markov chain with transition matrix $P_\star \in \mathbb{R}^{K \times K}$, where the diagonal entries $p_{\text{stay}} \in [0.80, 0.95]$ represent the probability of remaining in the same regime (dwell time). Given the latent path $(z_1,\ldots,z_T)$, the observations follow a Gaussian autoregressive process that interpolates the previous sample toward the mean of the active basin. More specifically, we initialize $\mathbf{x}_1 \sim \mathcal{N}(\boldsymbol{\mu}_{z_1},\boldsymbol{\Sigma})$ with the mean of the initial regime $\boldsymbol{\mu}_{z_1}$ and the positive definite covariance matrix $\boldsymbol{\Sigma}\succ0$. Next, for $t\ge2$ we add Gaussian noise $\boldsymbol{\xi}_t \sim \mathcal{N}(\mathbf{0},\boldsymbol{\Sigma})$ and update
\begin{equation}
\mathbf{x}_t \;=\; \rho\,\mathbf{x}_{t-1} \;+\; (1-\rho)\,\boldsymbol{\mu}_{z_t} \;+\; \boldsymbol{\xi}_t,
\end{equation}
where $\rho\in[0.2,0.5)$ is the autoregressive coefficient that induces autocorrelation while preserving memory of the previous observation. Basin centers $\{\boldsymbol{\mu}_k\}_{k=1}^{K}$ are spaced according to SNR so that basins are distinct yet partially overlapping. 

The outputs of this simulated model are continuous state vectors $\mathbf{X}=[\mathbf{x}_1,\ldots,\mathbf{x}_T]$, the ground-truth latent states $(z_t)$, means of dynamic regimes $\{\boldsymbol{\mu}_k\}$, and the transition matrix $P_\star$, which are utizlied to evaluated our proposed energy model.  

\subsubsection{Kuramoto Oscillator Network with Template Locking} \label{sec:kuramoto}
The Kuramoto model provides a framework for studying synchronization in networks of coupled oscillators. In neuroscience and complex systems, phase dynamics often exhibit structured patterns influenced by intrinsic properties and external constraints. To capture these dynamics, we extend the classical Kuramoto oscillator network by introducing template locking, which imposes weak alignment toward predefined phase configurations, analogous to externally forced Kuramoto models studied in the synchronization literature \cite{childs2008stability, wang2012exponential, yoon2021impact}.

We model the phase dynamics of the Kuramoto network using an Euler-Maruyama discretization of the underlying stochastic differential equation, with updates performed at time steps of size $\Delta t > 0$:
\begin{equation}
\begin{aligned}
\theta_{i,t+1} \;=\; \theta_{i,t} \;+\; \Delta t\Big(\omega_i \;+\; \sum_{j=1}^N C_{ij}\,\sin(\theta_{j,t}-\theta_{i,t})\Big) \\
\;+\; \Delta t\,\alpha\,\sin(\varphi_{z_t,i}-\theta_{i,t}) \;+\; \sqrt{2\zeta\,\Delta t}\,\xi_{i,t}.
\end{aligned} \label{eq:kuramoto}
\end{equation}

Here, $\theta_{i,t}$ is the phase of oscillator $i$ at time step $t$, $\omega_i$ is the natural frequency of oscillator $i$, $C_{ij}$ is the coupling strength between oscillators $i$ and $j$, $\alpha>0$ is the template phase-locking gain, $\varphi_{z_t,i}$ is the phase of oscillator $i$ in the selected template $\boldsymbol{\varphi}_{z_t}$ at time $t$, $\zeta$ is the diffusion coefficient controlling noise intensity, and $\xi_{i,t} \overset{\text{i.i.d.}}{\sim} \mathcal{N}(0,1)$ are independent standard Gaussian variables. The outputs of this Kuramoto model simulation are the continuous phase-derived observables $y_{i,t} = \sin(\theta_{i,t})$, which form the multivariate time series used as input to the energy landscape analysis.

\subsubsection{Extracting Energy Landscapes}
Here we discuss how we estimate the energy landscapes from data and assign each time point to a basin for both discrete and continuous formulations.

\paragraph{Discrete Energy Landscape}  
For the discrete energy landscape, we first binarize each variable's signal using its median, then estimate the Ising model parameters via $\ell_2$-regularized pseudolikelihood. Next, we identify local energy minima (basins) using a greedy single-variable update procedure, where we iteratively flip any binary variable (change its state from $-1$ to $+1$ or vice versa) that decreases the energy $E$ until no further improvement is possible. Finally, each time point is assigned to the basin reached by this energy descent.

\paragraph{Continuous Energy Landscape} \label{sec:GMM_Energy}

For the continuous energy model, we allow multiple basins in $\mathbb{R}^N$ by fitting a Gaussian mixture extension of our unimodal CEL described in Sec.~\ref{ssec:cont_energy}. In other words, we let each Gaussian component has its own positive definite precision matrix $\mathbf{S}_m$, and in the simulation experiments we set the number of mixture components to $M = K$ for comparability. In applications where the true number of basins is unknown, $M$ is treated as a hyperparameter and selected using standard model selection criteria, and we estimate the parameters $\{\eta_m,\boldsymbol{\mu}_m,\mathbf{S}_m \}_{m=1}^M$ by maximum likelihood, where $\eta_m$ denotes the mixture weight for component $m$ and satisfies $\eta_m \ge 0$ and $\sum_{m=1}^M \eta_m = 1$. The energy function is then defined as the negative log of the mixture density:
\begin{equation} \label{eq:GMM_Energy}
E(\mathbf{x}) = -\log\!\Bigg(\sum_{m=1}^{M}\eta_m\,\phi(\mathbf{x};\boldsymbol{\mu}_m,\mathbf{S}_m)\Bigg),
\end{equation}
where $\phi(\mathbf{x};\boldsymbol{\mu}_m,\mathbf{S}_m)$ denotes the multivariate Gaussian density with precision matrix $\mathbf{S}_m$:
\begin{equation}
\phi(\mathbf{x};\boldsymbol{\mu},\mathbf{S}) = \frac{|\mathbf{S}|^{1/2}}{(2\pi)^{N/2}} \exp\!\Big[-\tfrac{1}{2}(\mathbf{x}-\boldsymbol{\mu})^\top \mathbf{S} (\mathbf{x}-\boldsymbol{\mu})\Big].
\end{equation}

The energy function in \ref{eq:GMM_Energy} is generally nonconvex because it involves the negative logarithm of a sum of exponentials. This nonconvexity arises from the mixture formulation, where each Gaussian component contributes a separate mode. As a result, the energy landscape exhibits multiple local minima, typically located in regions where a single component density dominates the mixture. These regions correspond to points where the posterior probability of a given component, conditioned on the observation, is maximal. This observation underpins the labeling procedure, which assigns each data point to the component with the highest posterior probability (commonly referred to as its responsibility):  
\begin{equation}
\gamma_{t,m} = \frac{\eta_m\,\phi(\mathbf{x}_t;\boldsymbol{\mu}_m,\mathbf{S}_m)}{\sum_{m=1}^M \eta_m\,\phi(\mathbf{x}_t;\boldsymbol{\mu}_m,\mathbf{S}_m)},
\end{equation}
which represent the posterior probabilities of component membership. Each observation $\mathbf{x}_t$ is then assigned to the component with the highest responsibility according to the maximum a posteriori (MAP) rule:
\begin{equation}
\hat{z}_t = \arg\max_m \gamma_{t,m},
\end{equation}

where $\hat{z}_t$ is the maximum a posteriori (MAP) estimate of the latent component label for observation $\mathbf{x}_t$.

\subsubsection{Evaluation Metrics} \label{sec:cont_metrics}

To compare the extracted energy features with the true underlying structure of the system, we first need to align the recovered energy states to the ground truth. The MAP assignments partition the data into clusters corresponding to local minima in the energy landscape, referred to as recovered basins. These basins are represented in the data space $\mathbb{R}^N$. For each recovered basin $b$, we compute its centroid $\bar{\mathbf{y}}^{(b)}$ as the empirical mean of the data points $\{\mathbf{y}_t\}$ assigned to $b$. For SLDS, we match $\bar{\mathbf{x}}^{(b)}$ to the true centers $\{\boldsymbol{\mu}_k\}$, and for Kuramoto, we match it to $\{\mathbf{y}_k\}$, where $\mathbf{y}_k = \sin(\boldsymbol{\varphi}_k)$.

We then apply the Hungarian algorithm to a $K \times K$ cost matrix, where each entry is the Euclidean distance between a recovered basin centroid and a true center. For the Ising basins, we first map them into $\mathbb{R}^N$ by computing their data-space centroids before alignment. This procedure produces an optimal one-to-one matching between the recovered basins and the ground-truth states. Next, we utilize this alignment to compute three evaluation metrics as follows.

\paragraph{Basin Recovery (BR)}  

The first metric is calculated as the fraction of true basins that have a matched recovered centroid within a pre-specified Euclidean tolerance $\kappa$. This metric ranges from $0$ to $1$, where the value one corresponds to perfect recovery of all basins. Formally, we define:
\begin{equation}
BR = \frac{1}{K} \sum_{k=1}^{K} \mathbb{1}\Big( \big\| \bar{\mathbf{x}}^{(b_k)} - \boldsymbol{\mu}_k \big\|_2 \leq \kappa \Big),
\end{equation}
where $\mathbb{1}$ is the indicator function, $\bar{\mathbf{x}}^{(b_k)}$ is the recovered centroid matched to true basin $k$, and $\|\cdot\|_2$ denotes the Euclidean norm.

\paragraph{Transition Matrix Accuracy (TMA)}  

For the second metric, we estimated a transition matrix $\hat{P}$ from the recovered labels and compare it to the ground-truth matrix $P_\star$ using the average row-wise total variation agreement:
\begin{equation}
\mathrm{TMA} \;=\; 1 \;-\; \frac{1}{2K}\sum_{l=1}^{K}\left\|\hat{P}_{l:}-P_{\star\,l:}\right\|_1,
\end{equation}
where $\|\cdot\|_1$ denotes the element wise $\ell_1$ norm. This metric equals to one for exact recovery and approaches zero as the difference between the actual transition matrix and the estimated transition matrix increases.

\paragraph{State Distribution Agreement (SDA)}  
The final metric compares the empirical occupancy of recovered states with the true occupancy computed from the ground-truth state sequence $(z_t)$. The empirical occupancy $\hat{\nu}$ is defined as the proportion of time spent in each state:
\begin{equation}
\hat{\nu}_k = \frac{1}{T} \sum_{t=1}^T \mathbb{1}(z_t = k), \quad k = 1,\dots,K,
\end{equation}
where $\mathbb{1}(\cdot)$ is the indicator function, and $z_t$ is the latent state at time $t$. We then compute:
\begin{equation}
\mathrm{SDA} \;=\; 1-\tfrac{1}{2}\sum_{k=1}^{K} |\hat{\nu}_k-\nu_k|,
\end{equation}
where $|\cdot|$ denotes the absolute value, and $\nu_k$ is the true occupancy of state $k$. SDA ranges from $0$ to $1$, with values close to $1$ indicating accurate recovery of state frequencies. 

\subsubsection{Experimental grid, summarization, and statistical testing}
We vary $N \in \{6,7,\ldots,14\}$ (number of ROIs), $K \in \{3,4,5\}$ (number of basins), $T \in \{500,1000\}$ (length), and SNR (low, medium, high). Each condition is repeated $50$ times with new random seeds. Results are summarized as means with percentile bootstrap confidence intervals across repeats. For between-method comparisons (continuous versus discrete), metrics are paired by simulation unit (same $N,K,T$, SNR, and seed). We test whether the median paired difference equals zero using the two-sided Wilcoxon signed-rank test and present paired $p$ values. Where factor-specific families of tests are reported, we control the false discovery rate using the Benjamini–Hochberg procedure at $0.05$. This procedure ranks all p-values and determines which ones are significant while keeping the expected proportion of false discoveries below the given threshold.

\subsection{Brain Tumor Data}
\label{sec:realdata}

We utilized a publicly available brain tumor imaging dataset \cite{aerts2022pre} to apply the proposed continuous energy landscape model to predict post-surgery cognitive decline (e.g., working memory, executive function, and reaction time) using pre-surgical rs-fMRI data. The following subsections describe the dataset, the training model, and the evaluation metrics.

\subsubsection{Dataset Description}
We utilized a publicly available dataset comprising $20$ adult brain tumor patients, including $7$ with gliomas and $13$ with meningiomas, with rs-fMRI and neuropsychological assessments conducted before surgery and six months after resection \cite{aerts2022pre}. Imaging included T1-weighted anatomy and rs-fMRI with a repetition time of approximately 2100 to 2400 ms, an echo time of approximately 27 ms, and voxel size of $3 \times 3 \times 3$ mm$^3$. 

\subsubsection{Preprocessing}
The rs-fMRI preprocessing steps include motion correction, slice-timing correction, brain extraction, anatomical registration, normalization to Montreal Neurological Institute space, and high-pass temporal filtering with a 0.01 Hz cutoff. Tumor masks were applied prior to region extraction to avoid lesion signal. This step ensures that any signal coming from the lesion (tumor tissue) does not contaminate the measurements for healthy brain regions. Time series were then extracted from regions associated with the Default Mode Network (DMN), the Salience Network (SN), and the Limbic Network (LN), given their importance in cognitive function \cite{menon2011large}.

\subsubsection{Cognitive Measures}
We considered working memory, executive function, and reaction time in this study. Working memory was assessed by a Spatial Span Performance (SSP) score with a range of $2$ to $9$. Executive function was assessed by a composite score with a range of $2$ to $11$. We binarized the working memory and executive function scores into low and high categories based on established clinical thresholds \cite{tran2024high}, while reaction time was retained as a continuous variable.

\subsubsection{Energy Extraction}
For each participant and each functional network (DMN, SN, and LN), we independently fit both continuous and discrete models using that participant's rs-fMRI data. Specifically, for the discrete model, we applied median binarization per region, estimated Ising parameters via regularized pseudolikelihood, computed per-timepoint energy values, and assigned each time point to a basin. To assign each time point to an energy basin, we implemented a greedy energy-descent procedure: starting from the time point’s binary pattern, we iteratively flipped bits to reduce the energy until reaching a local minimum (mode). This local minimum served as the basin label, grouping time points that converged to the same attractor in the energy landscape. For the continuous model, we performed Gaussian maximum likelihood estimation without binarization, yielding per-timepoint energy values. Because the quadratic energy landscape is unimodal, we could not identify multiple basins directly from the continuous energy landscape. Instead, we applied a Gaussian-mixture extension of CEL (CEL-Mix), where the energy is defined as the negative log-likelihood of a mixture of \(M\) Gaussian components (see Eq.~\eqref{eq:GMM_Energy}). The number of components \(M\) was determined using the Bayesian Information Criterion (BIC).

Because each subject's energy model is fit independently, absolute energy scales are not directly comparable across subjects. We therefore rescale each subject's energy values separately before using them in subsequent analyses:
\begin{equation}
\begin{aligned}
\tilde E_t &= \frac{E_t - E_{\min}}{\sigma_E},\\
\end{aligned}
\end{equation}
where $E_{\min}=-\mathbf{h}^\top \boldsymbol{\mu} - \tfrac12  \mathbf{h}^\top \mathbf{S}^{-1}\mathbf{h}$ is the minimum energy value and $\sigma_E=\sqrt{\operatorname{tr}(\mathbf{S}^{-1})/N}$ is expected variability in energy values for that subject's model. Unless stated otherwise, the primary predictive features are per-timepoint normalized energy values $\{\tilde E_t\}$ averaged within each network.

\subsubsection{Predictive Modeling}
We used nested Leave-One-Subject-Out (LOSO) cross-validation to train a random forest model based on energy features extracted for each subject. In each outer fold, one subject was held out for testing, and a random forest was trained on the remaining subjects' features. Model hyperparameters, including the number of trees, maximum depth, and minimum samples per leaf, were tuned within an inner leave-one-subject-out loop. For classification, we averaged timepoint-level predicted probabilities to obtain a subject-level probability and computed Area Under the ROC Curve (AUC) from these values. For accuracy and $F_1$ score, we applied a threshold of $0.5$ to convert probabilities into binary labels. Lastly for the regression, we averaged timepoint-level predictions to obtain a single subject-level estimate.

\subsubsection{Performance Evaluation}
As mentioned in the previous section, for the classification problem (predicting working memory and executive function), we report accuracy, AUC, and $F_1$ scores. For the regression problem (predicting reaction time), we report RMSE and $R^2$. All values are computed at the subject level within the outer LOSO folds and summarized as mean~$\pm$~standard deviation across folds.

For the statistical comparison between the discrete and continuous energy models, we applied standard paired tests appropriate for each metric: DeLong's test for correlated ROC curves (AUC), McNemar's test for paired accuracy proportions, a paired permutation test for subject-level $F_1$ scores, and the Wilcoxon signed-rank test for per-subject absolute errors in regression (with $R^2$ reported descriptively). Within each cognitive outcome (WM, EF, RT) across the three networks, $p$-values were adjusted using the Benjamini-Hochberg procedure at $q = 0.05$ to control the false discovery rate (FDR).

\section{Results}
\label{sec:results}

\subsection{Continuous Energy Landscape Models More Accurately Capture Intrinsic Neural Dynamics}
\label{sec:results_simulated}

Figure~\ref{fig:slds_grouped} shows the results of applying continuous and Ising-based binary energy models to the simulated dataset from the Switching Linear Dynamical System (SLDS). Performance is evaluated across the three metrics (BR, TMA, SDA) described in section \ref{sec:cont_metrics}. As seen in the plots, the continuous energy landscape model provides significantly better performance compared to the discrete Ising-based model in capturing the underlying energy states and their transitions. The improvements are particularly noticeable in terms of transition matrix accuracy and state distribution agreement, which suggest that the continuous model more effectively captures the dynamics and structural properties of the latent energy landscape.

\begin{figure*}[!t]
    \centering
    \includegraphics[width=\textwidth]{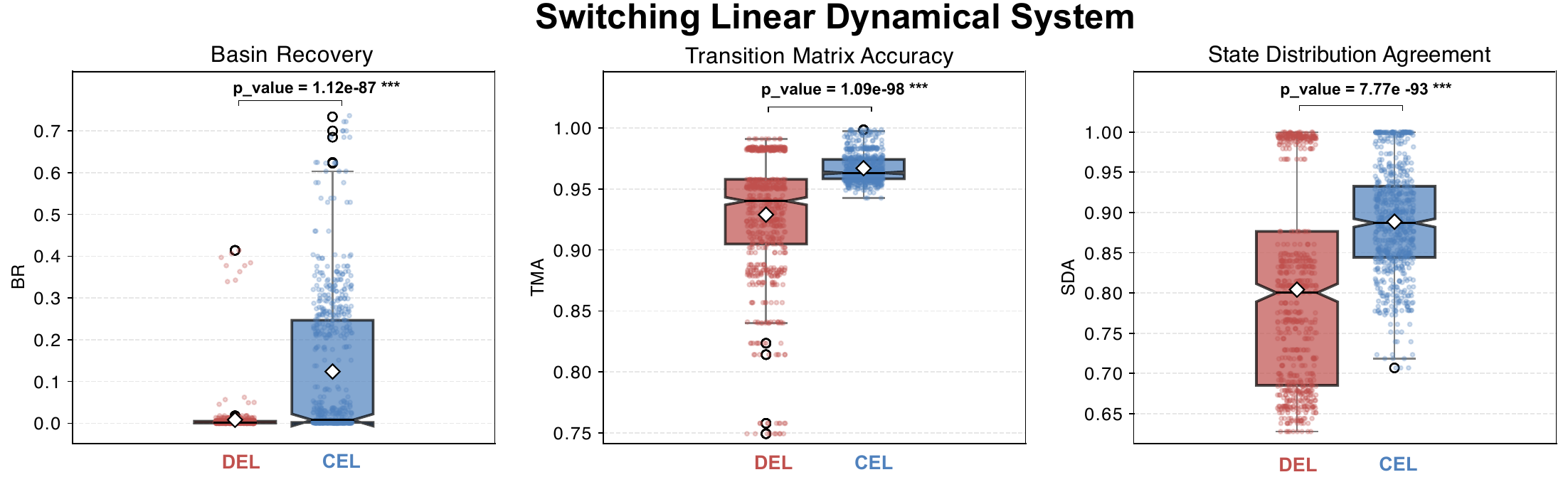}
    \caption{Switching Linear Dynamical System (SLDS): grouped performance on Basin Recovery (BR), Transition Matrix Accuracy (TMA), and State Distribution Agreement (SDA) for the discrete (DEL) and continuous (CEL) energy landscape models across the simulation grid ($N\in\{6,\ldots,14\}$, $K\in\{3,4,5\}$, $T\in\{500,1000\}$, SNR levels). The bar plots summarize results across repeats, where paired Wilcoxon signed-rank $p$-values shown at the top of each figure. Here, CEL denotes the mixture extension (CEL-Mix) used for multi-basin recovery. The CEL model provides a significantly better performance in capturing the underlying energy states and their transitions.}
    \label{fig:slds_grouped}
\end{figure*}

We also tested both models using the Kuramoto model, which provides a framework for studying the collective dynamics in networks of coupled oscillators. The results shown in Fig.~\ref{fig:kuramoto_grouped} indicate that the continuous model exhibits significantly better performance in terms of basin recovery, but not in the other two metrics. These findings were expected, as the Kuramoto model introduced in Section~\ref{sec:kuramoto} exhibits global phase alignment that tends to collapse into two dominant macrostates: a synchronized regime, where the coupling term $\sum_{j=1}^N C_{ij}\,\sin(\theta_{j,t}-\theta_{i,t})$ in Eq.~\ref{eq:kuramoto} strongly drives phases toward alignment, and a desynchronized regime, where this coupling influence is weak relative to intrinsic frequencies and noise. This emergent bistability effectively discretizes the energy landscape at the macroscopic level, making Ising-based models well-suited for capturing these transitions, while continuous models provide limited advantage in this context.

\begin{figure*}[!t]
    \centering
    \includegraphics[width=\textwidth]{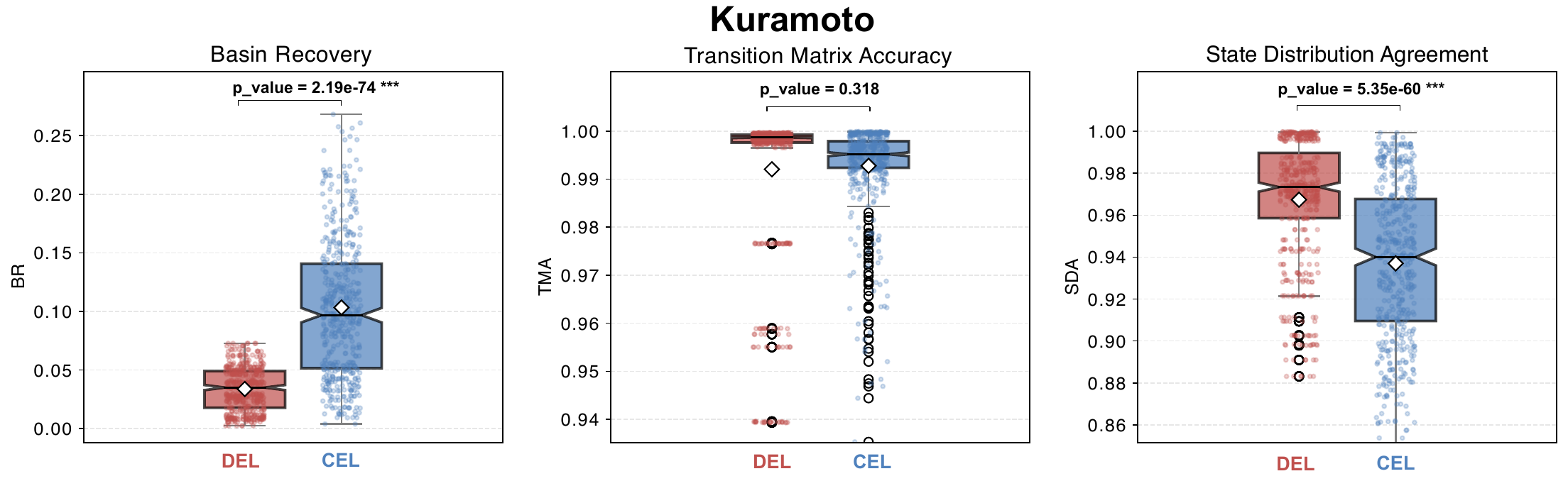}
    \caption{Kuramoto network with template locking: grouped performance on BR, TMA, and SDA for the discrete (DEL) and continuous (CEL) energy landscape models across the simulation grid. Bars summarize results across repeats; paired Wilcoxon signed-rank $p$-values are shown in-figure. Here, CEL denotes the mixture extension (CEL-Mix) used for multi-basin recovery. The CEL model provides a significantly better performance in terms of BR, but not for TMA and SDA. This was expected, as Ising-based discrete energy models are well-suited to capture the emergent bistability exhibited by Kuramoto dynamics.}
    \label{fig:kuramoto_grouped}
\end{figure*}

\subsection{Continuous Energy Landscape Features are More Predictive of Postsurgery Cognitive Outcomes}
\label{sec:results_realdata}

Table~\ref{tab:wm_results} shows the classification performance for predicting postoperative working memory outcomes using Ising-based and continuous energy features. The random forest model trained on continuous features achieves higher accuracy, area under the receiver operating characteristic curve (AUC), and $F_1$ scores across all canonical brain networks compared to the model trained on Ising-based features. The performance gain is dominant across networks (AUC $\geq 0.94$ for all three), with particularly large improvements in the default mode network (DMN) and limbic network (LN). Paired tests with Benjamini–Hochberg FDR correction confirmed these improvements: AUC-DMN/SN/Limbic all $p < 0.01$; Accuracy-significant across networks (DMN $p < 0.01$, SN $p < 0.05$, Limbic $p < 0.01$); and $F_1$-significant in all three (DMN $p < 0.05$, SN $p < 0.05$, Limbic $p < 0.05$).

The performance of the trained random forest models for predicting executive function is summarized in Table~\ref{tab:ec_results}. The model trained on continuous energy features consistently outperforms the model based on binarized energy features. Specifically, accuracy reaches $0.90$ in the Salience Network (SN), $0.85$ in the Default Mode Network (DMN), and similarly high values in the Limbic Network (LN), while AUC scores are as high as $0.97$. The $F_1$ scores remain balanced and high, indicating robust classification performance. After FDR correction, differences were statistically significant: AUC—DMN $p < 0.05$, SN $p < 0.05$, LN $p < 0.01$; Accuracy—DMN/SN/LN all $p < 0.01$; and $F_1$—DMN $p < 0.05$, SN/LN $p < 0.01$. 

Lastly, Table~\ref{tab:rt_results} shows regression performance for reaction time, where the random forest model trained on continuous energy features achieves lower RMSE and higher $R^2$ values across all networks, with $R^2$ reaching $0.46$ in the Limbic Network. In contrast, the model trained on Ising-based energy features yields substantially lower explanatory power, with $R^2$ values below $0.12$. Paired Wilcoxon tests on per-subject absolute errors confirmed the RMSE reductions (DMN/SN/Limbic all $p < 0.01$).

\begin{table*}[!t]
\centering
\begin{tabular}{llcccccc}
\toprule
\textbf{Network} & \textbf{Method} & \textbf{Accuracy} & \textbf{AUC} & \textbf{$F_1$} & $p_{\text{Acc}}$ & $p_{\text{AUC}}$ & $p_{F_1}$ \\
\midrule
\multirow{2}{*}{DMN}    & Discrete & 0.65 $\pm$ 0.108 & 0.67 $\pm$ 0.123 & 0.61 $\pm$ 0.112 & -- & -- & -- \\
                        & Continuous & \textbf{0.90 $\pm$ 0.095} & \textbf{0.94 $\pm$ 0.051} & \textbf{0.86 $\pm$ 0.165} & \textbf{$<$\!0.01} & \textbf{$<$\!0.01} & \textbf{$<$\!0.05} \\
\midrule
\multirow{2}{*}{SN}     & Discrete & 0.70 $\pm$ 0.120 & 0.79 $\pm$ 0.076 & 0.74 $\pm$ 0.095 & -- & -- & -- \\
                        & Continuous & \textbf{0.80 $\pm$ 0.113} & \textbf{0.96 $\pm$ 0.023} & \textbf{0.81 $\pm$ 0.109} & \textbf{$<$\!0.05} & \textbf{$<$\!0.01} & \textbf{$<$\!0.05} \\
\midrule
\multirow{2}{*}{Limbic} & Discrete & 0.75 $\pm$ 0.078 & 0.88 $\pm$ 0.052 & 0.76 $\pm$ 0.075 & -- & -- & -- \\
                        & Continuous & \textbf{0.90 $\pm$ 0.099} & \textbf{0.97 $\pm$ 0.032} & \textbf{0.88 $\pm$ 0.107} & \textbf{$<$\!0.01} & \textbf{$<$\!0.01} & \textbf{$<$\!0.05} \\
\bottomrule
\end{tabular}
\caption{\textbf{Working memory classification performance using energy features derived from discrete and continuous energy landscape models across the three networks (DMN, SN, and Limbic).} Reported metrics include the mean and standard deviation of Accuracy, AUC, and $F_1$ scores, computed using leave-one-subject-out cross-validation. Statistical comparisons include $p_{\text{AUC}}$ from DeLong's paired test for correlated ROC curves, $p_{\text{Acc}}$ from McNemar's test for paired accuracies, and $p_{F_1}$ from a paired permutation test, with significant values shown in bold. Overall, the continuous model consistently achieved higher AUC, accuracy, and $F_1$ scores compared to the discrete model.}
\label{tab:wm_results}
\end{table*}

\begin{table*}[!t]
\centering
\begin{tabular}{llcccccc}
\toprule
\textbf{Network} & \textbf{Method} & \textbf{Accuracy} & \textbf{AUC} & \textbf{$F_1$} & $p_{\text{Acc}}$ & $p_{\text{AUC}}$ & $p_{F_1}$ \\
\midrule
\multirow{2}{*}{DMN}    & Discrete & 0.70 $\pm$ 0.078 & 0.75 $\pm$ 0.084 & 0.76 $\pm$ 0.056 & -- & -- & -- \\
                        & Continuous & \textbf{0.85 $\pm$ 0.092} & \textbf{0.90 $\pm$ 0.102} & \textbf{0.85 $\pm$ 0.091} & \textbf{$<$\!0.01} & \textbf{$<$\!0.05} & \textbf{$<$\!0.05} \\
\midrule
\multirow{2}{*}{SN}     & Discrete & 0.70 $\pm$ 0.089 & 0.82 $\pm$ 0.079 & 0.70 $\pm$ 0.090 & -- & -- & -- \\
                        & Continuous & \textbf{0.90 $\pm$ 0.042} & \textbf{0.97 $\pm$ 0.101} & \textbf{0.90 $\pm$ 0.043} & \textbf{$<$\!0.01} & \textbf{$<$\!0.05} & \textbf{$<$\!0.01} \\
\midrule
\multirow{2}{*}{Limbic} & Discrete & 0.65 $\pm$ 0.046 & 0.68 $\pm$ 0.078 & 0.65 $\pm$ 0.052 & -- & -- & -- \\
                        & Continuous & \textbf{0.90 $\pm$ 0.082} & \textbf{0.97 $\pm$ 0.092} & \textbf{0.90 $\pm$ 0.084} & \textbf{$<$\!0.01} & \textbf{$<$\!0.01} & \textbf{$<$\!0.01} \\
\bottomrule
\end{tabular}
\caption{\textbf{Executive function classification performance using energy features derived from discrete and continuous energy landscape models across the three networks (DMN, SN, and Limbic).} Reported metrics include the mean and standard deviation of Accuracy, AUC, and $F_1$ scores, computed using leave-one-subject-out cross-validation. Statistical comparisons include $p_{\text{AUC}}$ from DeLong's paired test for correlated ROC curves, $p_{\text{Acc}}$ from McNemar's test for paired accuracies, and $p_{F_1}$ from a paired permutation test, with significant values shown in bold. The continuous model consistently achieved higher AUC, accuracy, and $F_1$ scores compared to the discrete model.}
\label{tab:ec_results}
\end{table*}

\begin{table*}[!t]
\centering
\begin{tabular}{llccc}
\toprule
\textbf{Network} & \textbf{Method} & \textbf{RMSE (ms)} & \textbf{$R^2$} & $p_{\text{RMSE}}$ \\
\midrule
\multirow{2}{*}{DMN}    & Discrete & 5651.173 $\pm$ 235 & 0.087 $\pm$ 0.052 & -- \\
                        & Continuous & \textbf{3904.849 $\pm$ 128} & \textbf{0.369 $\pm$ 0.034} & \textbf{$<$\!0.01} \\
\midrule
\multirow{2}{*}{SN}     & Discrete & 5556.262 $\pm$ 324 & 0.103 $\pm$ 0.075 & -- \\
                        & Continuous & \textbf{4377.112 $\pm$ 226} & \textbf{0.293 $\pm$ 0.042} & \textbf{$<$\!0.01} \\
\midrule
\multirow{2}{*}{Limbic} & Discrete & 5494.627 $\pm$ 227 & 0.113 $\pm$ 0.036 & -- \\
                        & Continuous & \textbf{3343.642 $\pm$ 112} & \textbf{0.460 $\pm$ 0.029} & \textbf{$<$\!0.01} \\
\bottomrule
\end{tabular}
\caption{\textbf{Reaction time regression performance using energy features derived from discrete and continuous energy landscape models across the three networks (DMN, SN, and Limbic).} Reported metrics include the mean and standard deviation of RMSE and $R^2$, computed using leave-one-subject-out cross-validation. Statistical comparisons include $p_{\text{RMSE}}$ values from a paired Wilcoxon signed-rank test on per-subject absolute errors, with significant values shown in bold. Overall, the continuous model consistently achieved lower RMSE and higher $R^2$ scores compared to the discrete model.}
\label{tab:rt_results}
\end{table*}

\section{Discussion}
\label{sec:discussion}
\subsection{Proposed Framework and Key Contributions}
Although neurons exhibit binary spiking behavior, large-scale brain dynamics are inherently continuous. Neural populations display graded firing rates and smooth fluctuations, and empirical evidence from neuroimaging and electrophysiology demonstrates that cognitive states evolve along continuous trajectories rather than abrupt transitions \cite{deco2017dynamics, breakspear2017dynamic, amini2025adhdeepnet}. In this study, we present a continuous formulation of energy landscape models to capture such gradual shifts and metastable dynamics in the brain. Our approach defines a continuous energy function based on the precision matrix and estimates this matrix using a data-driven framework built on graph convolutional networks with a custom loss function. To validate the proposed method, we conducted extensive simulation-based and real-world experiments, comparing its performance against traditional Ising-based energy models. 

Our results demonstrate that continuous energy landscape models consistently outperform discrete Ising-based models in capturing latent dynamics, recovering transition structures, and predicting cognitive outcomes in real-world fMRI datasets from brain tumor patients. These improvements are grounded in two well-established principles: (i) dichotomizing continuous variables discards valuable graded information and reduces statistical power \cite{maccallum2002practice}; and (ii) continuous models based on moment-matched Gaussian approximations offer greater tractability and efficiency for representing high-dimensional neuroimaging data \cite{hlinka2011functional, colclough2018multi}.

The continuous energy landscape model demonstrated strong performance in recovering state basins and transition dynamics within the Switching Linear Dynamical System (SLDS), accurately capturing both occupancy and switching patterns. In contrast, improvements offered by the continuous model were less pronounced in the Kuramoto system. This is because Kuramoto dynamics, while inherently continuous, exhibit strong pairwise phase correlations that align with the Ising model's assumption of binary interactions \cite{wang2021solving}, enabling the discrete model to capture certain aspects of synchronization despite its coarse representation. Nonetheless, overall the continuous model outperformed the Ising model in both experiments, highlighting its robustness and ability to preserve richer signal information.

\subsection{Insights on Cognitive Function}

In the energy landscape framework, each point in the state space represents a specific configuration of brain activity, and its associated energy value reflects the stability of that configuration \cite{watanabe2014energy}. Lower energy values indicate highly stable states, commonly interpreted as attractors corresponding to dominant cognitive or resting states, whereas higher energy values represent less stable configurations with an increased likelihood of transitions \cite{ashourvan2017energy}. The depth of an energy basin signifies the robustness of a state, where deeper basins suggest strong network integration and resistance to perturbations, while shallow basins indicate flexibility and a greater potential for rapid switching between states \cite{gu2018energy}. 

Within this context, our analysis of resting-state fMRI datasets indicates that preoperative neural stability, quantified through energy landscape models, is a strong predictor of postoperative cognitive function in brain tumor patients. This finding offers a mechanistic link between network stability and cognitive decline. Indeed, several studies have shown that functional connectivity patterns and the tumor’s proximity to specific large-scale networks are stronger predictors of cognitive outcomes compared to traditional anatomical features, such as tumor volume, which only provide moderate predictive power \cite{herbet2024predictors,luckett2024predicting}.

These results suggest that cognitive decline is not solely determined by the extent of structural damage but is critically influenced by the integrity and dynamic stability of functional networks. Hence, to better understand and predict postoperative outcomes, we need to utilize integrative approaches that combine structural and functional connectivity measures, where the stability and dynamic transitions of functional networks can be captured using energy landscape models. This integrated perspective could enable the development of robust biomarkers for surgical planning and personalized rehabilitation strategies aimed at preserving cognitive function.

\subsection{Limitations and Future Directions}
The proposed continuous energy landscape model represents an important step toward modeling the stability and dynamics of neuronal networks, but several challenges remain to improve its applicability and generalizability. First, the current formulation of the continuous energy landscape model relies on a Gaussian approximation, which uses the precision matrix to capture pairwise interactions among regions of interest. While this formulation enables efficient and tractable estimation of energy values, it is restricted to symmetric, undirected couplings and does not account for directional or higher-order interactions. Future extensions could incorporate asymmetric connectivity patterns and nonlinear dependencies to enable a more realistic representation of brain network transitions and their impact on cognitive function.

Furthermore, the current formulation of the continuous energy function employs graph neural networks (GNNs) to estimate the precision matrix. However, this indirect approach may introduce approximation errors and limit the model’s ability to capture complex dynamics. A more direct approach could integrate the continuous energy landscape formulation into the GNN loss function, enabling the network to directly estimate energy landscape values and bypass intermediate precision matrix estimation. Such an approach could allow the model to learn more efficient and detailed representations of network stability and transitions, improving predictive accuracy and interpretability.

Lastly, from a neurophysiological perspective, the current approach relies on functional connectivity without incorporating any structural information from the brain. Nonetheless, as discussed in the previous section, incorporating structural information could provide new insights into neural network dynamics and their relationship to cognitive function. Hence, in our future work, we will explore how structural connectivity can be integrated into the continuous energy landscape framework to create a more comprehensive and biologically plausible model. This integration could enhance the model’s ability to capture the interplay between structural connectivity patterns and functional dynamics, which in turn could improve interpretability and applicability of continuous energy landscape models in clinical settings.

\section{Conclusion}
\label{sec:conclusion}

In this work, we introduced a continuous energy landscape framework to model large-scale brain dynamics, addressing key limitations of traditional Ising-based models. By formulating the energy function using a Gaussian approximation based on the precision matrix, we leveraged graph convolutional networks to estimate this matrix, enabling us to capture graded neural fluctuations and metastable transitions more effectively. Through simulation analyses, we demonstrated that continuous energy models preserve richer signal information, accurately capture latent dynamics, and reliably recover transition structures, providing new insights into the underlying mechanisms of neural dynamics. Furthermore, using real-world fMRI datasets from brain tumor patients, we analyzed postoperative cognitive decline and showed that continuous energy values are more predictive of cognitive outcomes than conventional Ising-based energy values. Our results point to a promising new direction for continuous modeling of neural dynamics and its potential to improve predictive biomarkers for clinical decision-making. Such analyses could enable more accurate assessments of network stability and cognitive resilience, ultimately informing personalized surgical planning and rehabilitation strategies. Nonetheless, further studies are needed to validate these findings across larger cohorts and diverse neurological conditions for a more comprehensive understanding of the neural dynamics captured by energy functions and their relevance to cognitive outcomes.

\section{Code and Data availability}
All the code and data used in this study are available at: \\
\href{https://github.com/3sigmalab/ContinuousEnergyLandscape}{https://github.com/3sigmalab/ContinuousEnergyLandscape}

\IEEEpeerreviewmaketitle

\bibliographystyle{IEEEtran}
\bibliography{refs}

\appendices
\section{Summary of Notations} \label{sec:notations}

\begin{table}[H]
\centering
\tiny
\caption{Summary of Notations.}
\label{tab:notation}
\resizebox{\linewidth}{!}{%
\begin{tabular}{ll}
\toprule
\textbf{Symbol} & \textbf{Meaning / Where used} \\
\midrule
$\|.\|_1$ & $L_1$ Norm \\
$\|.\|_2$ & $L_2$ (Euclidean) Norm\\
$\mathbb{1}$ & Indicator Function \\
$\succ0$ & Positive Definite Matrix \\
$\alpha$ & Template Phase Locking Gain\\
$\delta\in(0,1)$ & Edge Density \\
$\Delta t$ & Time Step \\
$\det(\cdot)$ & Determinant of a matrix \\
$\ell$ & Partition Function (Boltzmann Normalizer) \\
$\epsilon$ & Small Positive Constant used for Numerical Stability \\
$\eta_m$ &  Weight of Gaussian Mixture Component $m$ \\
$\gamma$  &  Responsibility of Gaussian Mixture Component (Posterior Probability)\\ 
$\kappa$ & Distance Threshold \\
$\lambda$ &  Regularization Parameter \\
$\mathcal{L}$ & Likelihood \\
$\mu$ & Signal Mean \\
$\nu$ & True Latent State Occupancy\\
$\hat{\nu}$ & Empirical State Occupancy \\
$\mu_{\Theta}$ & Effective Mean \\
$\boldsymbol{\mu}_{z}$ & Mean of Dynamic Regime\\
$\nabla$ & Gradient \\
$\omega$ & Natural Frequency of Oscillator \\
$\phi$ & Multivariate Gaussian Probability Density\\
$\rho$ & Autoregressive Coefficient \\
$\boldsymbol{\Sigma}$ & Signal Covariance\\
$\sigma_E$ & Energy Normalization Scale \\
$\tau$ & Correlation Threshold \\
$\theta$ & Phase in Kuramoto Model \\
$\boldsymbol{\Theta}$ & GCN Parameters \\
$\mathrm{tr}(\cdot)$ & Trace of a Matrix \\
$\boldsymbol{\xi}_t$ & Gaussian Noise \\
$\zeta$ &Diffusion Coefficient \\
$\mathbf{A}$ & Adjacency Matrix\\
$\mathbf{B}$ & Weighted Adjacency Matrix\\
$\tilde{\mathbf{B}}$& Normalized Weighted Adjacency Matrix \\
$\mathbf{C}$ & Kuramoto Coupling Matrix\\
$\mathbf{D}$ & Diagonal Matrix\\
$d$ & Node Feature Dimension \\
$d_M$ & Mahalanobis Distance \\
$E$ & Energy Value \\
$\tilde E$ & Normalized Energy Value \\
$E_{\min}$ & Minimum Energy Value \\
$f(\mathbf{x})$ & Auxiliary Energy Function \\
$G=(V,E)$ & Graph with Set of Nodes $V$ and edges $E$\\
$\mathbf{H}\in\mathbb{R}^{N\times d}$ & Hidden State in GCN \\
$\mathbf{h}$ & External Field (Bias) \\
$\mathbf{I}$ & Identity matrix \\
$i$ , $j$ & Index of ROIs \\
$J(\boldsymbol{\Theta})$ & Loss Function \\
$k$ & Index of Activation Pattern \\
$K$ & Total Number of Basins (States) \\
$l$ & The Row Index of Transition Matrix\\
$M$ & Number of Gaussian Mixture Components\\
$N$ & Total Number of Regions of Interest (ROIs)\\
$p(\mathbf{q}_{k})$ & Probability of State $\mathbf{q}$\\
$P_\star \in \mathbb{R}^{K \times K}$ & Transition Matrix\\
$\mathbf{q}_k\in\{-1,+1\}^N$ & Binary State Vector\\
$\mathbf{R}\in\mathbb{R}^{N\times N}$ & Pearson Correlation Matrix \\
$r$ & Latent Dimension\\
$\mathbf{S}\in\mathbb R^{N\times N}$ & Precision Matrix (Inverse Covariance)\\
$\mathrm{SNR}$ & Signal-to-Noise Ratio \\
$\mathrm{sym}(\mathbf{A})$ & Symmetric projection: $(\mathbf{A} + \mathbf{A}^\top)/2$ \\
$T$ & Total Number of Time Points\\
$t$ & Time Index\\
$\varphi$ & Template Phase in Kuramoto Model\\
$\mathbf{W}$ & Pairwise Coupling Matrix \\
$\mathbf{W}_Z$ & Latent Projection Matrix \\
$\mathbf{x}\in\mathbb R^N$ & Continuous State Vector\\
$\mathbf{x}_*$ & Global Minimizer\\
$\mathbf{y}\in\mathbb R^N$ & Continuous Signal\\
$\tilde{\mathbf{y}}\in\mathbb R^N$ & Signal Fluctuations \\
$\bar{\mathbf{y}}^{(b)}$ & Centroid of Basin $b$\\
$\mathbf{Z}$ & Low-Rank Embedding Matrix\\
$z$ & Latent Discrete State \\
$\hat{z}$ & Predicted Latent State \\  
\bottomrule
\end{tabular}}
\end{table}

\section{Continuous quadratic energy: derivations and properties}
\label{sec:appendix}



Consider the auxiliary quadratic form associated with the energy function $E(\mathbf{x})$, introduced to facilitate completing the square.
\[
f(\mathbf{x}) \;=\; \tfrac12\,\mathbf{x}^\top \mathbf{S}\,\mathbf{x} \;-\; \mathbf{h}^\top\mathbf{x},
\]

Since $\nabla f(\mathbf{x})=\mathbf{S}\mathbf{x}-\mathbf{h}$ and $\mathbf{S}\succ0$, the unique minimizer is
\begin{equation}
\mathbf{x}_*=\mathbf{S}^{-1}\mathbf{h}.
\end{equation}
Writing $\mathbf{x}=\mathbf{x}_*+\mathbf{u}$, we obtain
\begin{align}
f(\mathbf{x})
&=\tfrac12(\mathbf{x}_*+\mathbf{u})^\top\mathbf{S}(\mathbf{x}_*+\mathbf{u})-\mathbf{h}^\top(\mathbf{x}_*+\mathbf{u}) \\
&=\big(\tfrac12\mathbf{x}_*^\top\mathbf{S}\mathbf{x}_*-\mathbf{h}^\top\mathbf{x}_*\big)+\tfrac12\,\mathbf{u}^\top\mathbf{S}\mathbf{u}+\underbrace{\big(\mathbf{x}_*^\top\mathbf{S}\mathbf{u}-\mathbf{h}^\top\mathbf{u}\big)}_{=0}, \nonumber
\end{align}
because $\mathbf{S}\mathbf{x}_*=\mathbf{h}$. Hence
\begin{equation}
f(\mathbf{x})=f(\mathbf{x}_*)+\tfrac12(\mathbf{x}-\mathbf{x}_*)^\top\mathbf{S}(\mathbf{x}-\mathbf{x}_*).
\end{equation}

Compute
\begin{equation}
\begin{aligned}
f(\mathbf{x}_*)
&= \tfrac12\,\mathbf{x}_*^\top\mathbf{S}\,\mathbf{x}_* \;-\; \mathbf{h}^\top\mathbf{x}_* \\
&= \tfrac12\,\mathbf{h}^\top\mathbf{S}^{-1}\mathbf{h} \;-\; \mathbf{h}^\top\mathbf{S}^{-1}\mathbf{h} \\
&= -\tfrac12\,\mathbf{h}^\top\mathbf{S}^{-1}\mathbf{h}.
\end{aligned}
\label{eq:f_x_star}
\end{equation}

Therefore,
\begin{equation}
f(\mathbf{x})= -\tfrac12\,\mathbf{h}^\top\mathbf{S}^{-1}\mathbf{h} + \tfrac12(\mathbf{x}-\mathbf{S}^{-1}\mathbf{h})^\top\mathbf{S}(\mathbf{x}-\mathbf{S}^{-1}\mathbf{h}).
\end{equation}
Returning to the original energy on the uncentered variable,
\[
E(\mathbf{x})=f(\mathbf{x} - \boldsymbol{\mu})-\mathbf{h}^\top\boldsymbol{\mu},
\]
we obtain
\begin{equation}
\begin{aligned}
E(\mathbf{x})
&= \tfrac12\big(\mathbf{x}-\boldsymbol{\mu}-\mathbf{S}^{-1}\mathbf{h}\big)^\top
   \mathbf{S}\,\big(\mathbf{x}-\boldsymbol{\mu}-\mathbf{S}^{-1}\mathbf{h}\big) \\
&\quad - \tfrac12\,\mathbf{h}^\top\mathbf{S}^{-1}\mathbf{h} \;-\; \mathbf{h}^\top\boldsymbol{\mu}.
\end{aligned}
\label{eq:E_complete_square}
\end{equation}

we now demonstrate an alternative derivation by introducing and subtracting the term $\tfrac12\,\mathbf{h}^\top\mathbf{S}^{-1}\mathbf{h}$, which corresponds to the quadratic contribution of the minimizer $\mathbf{x}_*=\mathbf{S}^{-1}\mathbf{h}$. This manipulation isolates the centered quadratic term and makes the equivalence explicit. Specifically, add and subtract $\tfrac12\,\mathbf{h}^\top\mathbf{S}^{-1}\mathbf{h}$ to $f(\mathbf{x})$:
\[
f(\mathbf{x})=\Big[\tfrac12\,\mathbf{x}^\top\mathbf{S}\mathbf{x}-\mathbf{h}^\top\mathbf{x}+\tfrac12\,\mathbf{h}^\top\mathbf{S}^{-1}\mathbf{h}\Big]-\tfrac12\,\mathbf{h}^\top\mathbf{S}^{-1}\mathbf{h},
\]
and verify by expansion that the bracket equals
\[
\tfrac12(\mathbf{x}-\mathbf{S}^{-1}\mathbf{h})^\top\mathbf{S}(\mathbf{x}-\mathbf{S}^{-1}\mathbf{h}).
\]

The following expressions provide the closed-form characterization of the energy function, including its global minimizer, minimum value, and completed-square decomposition. These formulations reveal the convex structure of $E(\mathbf{x})$ and its dependence on $\mathbf{S}$, $\boldsymbol{\mu}$, and $\mathbf{h}$.

\begin{equation}
\mathbf{x}_* = \boldsymbol{\mu} + \mathbf{S}^{-1}\mathbf{h},
\end{equation}

\begin{equation}
E_{\min} = -\mathbf{h}^\top\boldsymbol{\mu} - \tfrac12\,\mathbf{h}^\top\mathbf{S}^{-1}\mathbf{h},
\end{equation}

\begin{equation}
E(\mathbf{x}) = \tfrac12(\mathbf{x} - \mathbf{x}_*)^\top\mathbf{S}(\mathbf{x} - \mathbf{x}_*) + E_{\min}.
\end{equation}

Because $\mathbf{S} \succ 0$, the energy $E(\mathbf{x})$ is strictly convex and coercive, meaning it is bounded below by $E_{\min}$ and grows unbounded as $\|\mathbf{x}\|_2 \to \infty$.


Using the Boltzmann form $p(\mathbf{x}) \propto \exp[-E(\mathbf{x})]$ and the completed-square decomposition in Eq.~\eqref{eq:E_complete_square}, we have
\begin{equation}
p(\mathbf{x}) = \exp\Big[-E_{\min} - \tfrac12(\mathbf{x}-\mathbf{x}_*)^\top \mathbf{S} (\mathbf{x}-\mathbf{x}_*)\Big].
\end{equation}

Gaussian integration yields the partition function (Boltzmann normalizer):
\begin{equation}
\ell = \int_{\mathbb{R}^N} \exp[-E(\mathbf{x})] \, d\mathbf{x}
= \exp[-E_{\min}] \, (2\pi)^{N/2} \, \det(\mathbf{S})^{-1/2}.
\end{equation}

Hence, the normalized density is
\begin{align}
p(\mathbf{x})
&= \frac{\det(\mathbf{S})^{1/2}}{(2\pi)^{N/2}}
   \exp\!\Big[-\tfrac12(\mathbf{x}-\mathbf{x}_*)^\top \mathbf{S} (\mathbf{x}-\mathbf{x}_*)\Big], \\
\mathbf{x}_* &= \boldsymbol{\mu} + \mathbf{S}^{-1}\mathbf{h}.
\end{align}

Equivalently,
\begin{equation}
p(\mathbf{x}) = \mathcal{N}\!\big(\boldsymbol{\mu} + \mathbf{S}^{-1}\mathbf{h}, \; \mathbf{S}^{-1}\big)
= \mathcal{N}\!\big(\boldsymbol{\mu}_\Theta, \; \mathbf{S}^{-1}\big).
\end{equation}

If $\mathbf{S}\succ0$, then $E(\mathbf{x})\to+\infty$ as $\|\mathbf{x}\|_2\to\infty$ (coercive), so $\ell<\infty$ and the Gaussian density on $\mathbb{R}^N$ is integrable. Conversely, an indefinite or singular quadratic does not yield an integrable Gaussian density on $\mathbb{R}^N$.

\end{document}